\newtheorem{theorem}{Theorem}[section]
\newtheorem{lemma}[theorem]{Lemma}
\date{}
\begin{document}
\title{Beta Critical for the Schr\"odinger Operator with Delta Potential }
   \author{
 Rajan Puri }
 
 \thanks{Department
 of Mathematics and Statistics, Wake Forest University,
Winston Salem, NC 27109, USA, (purir@wfu.edu).  }

\date{}


\begin{abstract} 
  For the one dimensional Schr\"odinger operator in the case of Dirichlet boundary condition, we show that $\beta_{cr}$ is positive and zero for the case of Neumann and Robin boundary condition considering the potential energy of the form $V(x)=-\beta \delta(x-a)$ where, $\beta \geq 0, \ a > 0.$ We prove that the $\beta_{cr}$ goes to infinity when the delta potential moves towards the boundary in dimension one with Dirichlet boundary condition. We also show that the $\beta_{cr}>0$ and $\beta \in (0,\frac{1}{2})$ considering Dirichlet problem with delta potential on the circle in dimension two. 
\end{abstract}

\keywords{
Schrodinger Operator, Delta Potential, Critical Value, Beta Critical, Negative Eigenvalues.}

\subjclass[2010]{35J25, 35P15, 47A10, 47D07}

\maketitle

\section{Introduction} This paper is a continuation of our work \cite{PV} where we studied the beta critical of the coupling constant of exterior elliptic problems and proved that $\beta_{cr}>0$ in the case of Dirichlet boundary condition and $\beta_{cr}=0$ in the case of the Neumann boundary condition in the dimension d=1 \text{and} 2 by studying the truncated resolvent operator. It was shown that the choice $\beta_{cr}>0$ or $\beta_{cr}=0$  depends on whether the truncated resolvent is bounded or goes to infinity when $\lambda\to 0^-$. In fact, $\beta_{cr}$ was expressed through truncated resolvent operator. There has been a considerable interest in the study of coupling constant and the problem was investigated by  several researchers like Barry Simon in\cite{BS}, Martin Klaus in \cite{MK}, Cranston, Koralov, Molchanov and Vainberg in \cite{CKMV}, and Yuriy Golovaty in \cite{YG}. It is known that the spectrum of $-\Delta-\beta V(x)$ consists of the absolutely continuous part $[0, \infty)$ and at most a finite number of of negative eigenvalues.
    $$\sigma(-\Delta-\beta V(x)) = \{\lambda_j\} \cup [0, \infty), \ 0\leq j\leq N , \quad \lambda_j \leq 0.$$
    We proved the dependence of beta critical with the boundary condition and dimension in our earlier paper \cite{PV}. Namely, 
\begin{theorem}[\cite{PV}]\label{tmr}
Consider the following elliptic problems in $\Omega$.
    \begin{equation}\label{eq:10}
        H_{0}u - \beta V(x)u -\lambda u= f , \ \ x\in \Omega,
    \end{equation}
    where $H_{0}= -\text{div}(a(x)\nabla)$, the potential $V(x)\geq 0$ is compactly supported and continuous, $\beta\geq0$ , $a(x)> 0 \ , \ a(x)\in C^1(\Omega)$, and $a=1 \ \text{when} \ |x|>>1$.
 If  $d=1$ or $ 2$ then $\beta_{cr}>0$ in the case of the Dirichlet boundary condition, and $\beta_{cr}=0$ in the cases of the Neumann boundary condition. 
\end{theorem}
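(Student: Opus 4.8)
The plan is to characterize $\beta_{cr}$ through the Birman–Schwinger principle and then read off the dichotomy from the threshold behaviour of the truncated resolvent, as announced in the Introduction. Factor the potential as $V=W^2$ with $W=\sqrt{V}\ge 0$ compactly supported and bounded, and write $R_\lambda=(H_0-\lambda)^{-1}$ for the resolvent of $H_0$ on $L^2(\Omega)$ subject to the prescribed boundary condition. A value $\lambda<0$ is a negative eigenvalue of $H_0-\beta V$ exactly when the Birman–Schwinger operator
\begin{equation}\label{eq:BS}
  K_\lambda \;=\; W\,R_\lambda\,W ,
\end{equation}
which is the truncated resolvent compressed to the support of $V$, has $1/\beta$ as an eigenvalue. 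First I would record that $K_\lambda$ is a compact, positive, self-adjoint operator (in $d=1,2$ it is Hilbert–Schmidt) and that, since $\partial_\lambda R_\lambda=R_\lambda^2\ge 0$ in the form sense, the norm $\|K_\lambda\|$ increases monotonically as $\lambda\uparrow 0^-$. Consequently the smallest coupling producing a bound state is
\begin{equation}\label{eq:betacr}
  \beta_{cr}=\Bigl(\,\lim_{\lambda\to 0^-}\|K_\lambda\|\,\Bigr)^{-1},
  \qquad \tfrac{1}{\infty}:=0 ,
\end{equation}
so the entire question reduces to deciding whether $\|K_\lambda\|$ stays bounded or diverges as $\lambda\to 0^-$.

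Next I would analyse the kernel of $K_\lambda$ in the limit $\lambda=-\kappa^2\to 0^-$. Because $a\equiv 1$ for $|x|\gg 1$ and $W$ is compactly supported, the model is the constant-coefficient operator $-\Delta$ on the exterior domain, where the Green's function is explicit. In $d=1$ the exterior problem is the half-line: the Dirichlet Green's function satisfies $G_D(x,y)=\kappa^{-1}\sinh(\kappa x_{<})\,e^{-\kappa x_{>}}\to x_{<}$, which is bounded on $\operatorname{supp}W$, whereas the Neumann Green's function $G_N(x,y)=\kappa^{-1}\cosh(\kappa x_{<})\,e^{-\kappa x_{>}}\sim \kappa^{-1}$ diverges. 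Hence $\|K_\lambda\|$ tends to a finite limit under Dirichlet and diverges like $\kappa^{-1}$ under Neumann. In $d=2$ the same mechanism operates through the logarithmic threshold expansion of the free resolvent: the constant (zero-energy) mode, which is responsible for the divergent term $\sim-\log\kappa$ of $K_\lambda$ on the whole plane, is annihilated by the Dirichlet condition and survives under the Neumann condition. Feeding the two alternatives into \eqref{eq:betacr} gives $\beta_{cr}>0$ in the Dirichlet case and $\beta_{cr}=0$ in the Neumann case, in both dimensions.

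The main obstacle is passing from the explicit constant-coefficient Green's functions to the genuine operator $H_0=-\operatorname{div}(a\nabla)$, whose coefficient is variable on a compact region and equals $1$ only at infinity. Here I would use a resolvent comparison: the difference $H_0-(-\Delta)$ is supported where $a\neq 1$, so the second resolvent identity writes $R_\lambda$ as the free exterior resolvent plus a correction that is regular at $\lambda=0$; thus the singular threshold behaviour is governed entirely by the boundary condition and is insensitive to the variable coefficients. The delicate point is to bound this correction uniformly as $\lambda\to 0^-$ and to verify that it neither cancels the Neumann singularity nor manufactures a spurious one in the Dirichlet case; in $d=2$ this requires keeping track of the logarithmic terms. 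Once the uniform estimate on the correction is established, \eqref{eq:betacr} yields the stated conclusions.
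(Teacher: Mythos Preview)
This theorem is not proved in the present paper; it is quoted from the earlier work~\cite{PV}, and the Introduction only summarizes the method used there: the result was obtained ``by studying the truncated resolvent operator,'' with the dichotomy $\beta_{cr}>0$ versus $\beta_{cr}=0$ decided by whether the truncated resolvent stays bounded or blows up as $\lambda\to 0^-$, and with $\beta_{cr}$ expressed explicitly through that operator. Your Birman--Schwinger argument is precisely this strategy: your operator $K_\lambda=W R_\lambda W$ \emph{is} the truncated resolvent compressed to $\operatorname{supp}V$, and your formula $\beta_{cr}=\bigl(\lim_{\lambda\to 0^-}\|K_\lambda\|\bigr)^{-1}$ is the expression alluded to. So, judged against the description given here, your approach coincides with that of~\cite{PV}.

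The outline is sound and the threshold analysis in the constant-coefficient case is correct. The one place that is genuinely only sketched is the passage from $-\Delta$ to $H_0=-\operatorname{div}(a\nabla)$ on the exterior domain: you invoke the second resolvent identity and assert that the correction is regular at $\lambda=0$, but this step requires care. The perturbation $H_0-(-\Delta)$ is first-order and compactly supported, so the correction term involves $\nabla R_\lambda^{\mathrm{free}}$ sandwiched between cutoffs; in $d=2$ one must check that the logarithmic singularity does not leak through this gradient term, and in both dimensions one needs the invertibility of the relevant $(I+\text{compact})$ factor at threshold. You flag this yourself as ``the delicate point,'' which is accurate; a complete proof would have to supply the uniform-in-$\lambda$ bound here, for instance via a limiting-absorption estimate or by constructing the exterior Dirichlet/Neumann Green's function directly and verifying its small-$\kappa$ asymptotics.
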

In \cite{PV}, we also studied the dependence of $\beta_{cr}$ on the distance between the support of the potential and the boundary of the domain. In fact, it was proven that, in the case of the Dirichlet boundary condition in the dimension one, $\beta_{cr}$ tends to infinity as potential moves towards the boundary.  In dimension two with the Dirichlet boundary condition, the behavior of $\beta_{cr}$ was interesting and depends on the relation between the rates of the shrinking of the support of the potential and the speed of its motion towards the boundary. We did not consider the Neumann boundary condition when $d=1$ or $2$ since $\beta_{cr}$ is always zero. In particular, we proved the following theorem in \cite{PV}. 
\begin{theorem}[\cite{PV}]\label{tlast}
If $d=1$, then $\beta_{cr}$ for the Schr\"{o}dinger operator
$-\Delta-\beta V(x)$ goes to infinity as $n\to\infty$ for the Dirichlet boundary condition. The same is true if $d=2$ and $|x(n)-x_0|<C/n,~n\to\infty$. If $d=2$ and $|x(n)-x_0|\to 0,~|x(n)-x_0|>C/n^\delta,~n\to\infty,$ with some $\delta\in(0,1)$, then $\beta_{cr}$ remains bounded as $n\to\infty$. If $d\geq 3$, then $\beta_{cr}$ remains bounded as $n\to\infty$ for both the Dirichlet and Neumann boundary conditions.
\end{theorem}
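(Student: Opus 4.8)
The plan is to reduce the whole statement to the behavior, as $n\to\infty$, of the norm of the sandwiched (truncated) resolvent at zero energy. As recalled before \thmref{tmr}, a negative eigenvalue of $-\Delta-\beta V_n$ exists, by the Birman--Schwinger principle, precisely when $\beta\,V_n^{1/2}R_0(\lambda)V_n^{1/2}$ has eigenvalue $1$ for some $\lambda<0$, where $R_0(\lambda)=(H_0-\lambda)^{-1}$ carries the boundary condition. These Birman--Schwinger eigenvalues are monotone in $\lambda$, so the critical coupling is
\[
\beta_{cr}=\Big(\lim_{\lambda\to 0^-}\big\|V_n^{1/2}R_0(\lambda)V_n^{1/2}\big\|\Big)^{-1}=:M_n^{-1}.
\]
Hence $\beta_{cr}\to\infty$ is equivalent to $M_n\to 0$, and $\beta_{cr}$ staying bounded is equivalent to $M_n$ staying bounded away from $0$. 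The theorem thus becomes a statement about the zero-energy limit of $R_0$ restricted to the shrinking support of $V_n$ (of diameter $\varepsilon_n\sim 1/n$, centered at $x(n)$ at distance $h_n=|x(n)-x_0|$ from the boundary).

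The first step is to identify the zero-energy Green's function $G_0(x,y)=\lim_{\lambda\to 0^-}R_0(\lambda)(x,y)$ with the correct boundary condition. In $d=1$ on the half-line with the Dirichlet condition at $x_0=0$ this is $G_0(x,y)=\min(x,y)$, so on the support one has $G_0\asymp h_n$ and, by a Schur estimate, $M_n\lesssim h_n\to 0$ (the fixed profile of $V_n$ contributing only a bounded factor), giving $\beta_{cr}\to\infty$ with no condition on the shrinking rate. In $d\ge 3$ the free kernel $c_d|x-y|^{2-d}$ is locally integrable and $R_0(0)$ is a genuine bounded positive operator; the Dirichlet (resp.\ Neumann) condition only subtracts (resp.\ adds) a smooth image term that does not touch the diagonal singularity, so $M_n$ keeps a positive lower bound coming from the self-energy and $\beta_{cr}$ stays bounded. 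These two regimes are essentially computations once $G_0$ is written down.

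The substance of the theorem is the two-dimensional Dirichlet case, where the two scales $\varepsilon_n\sim 1/n$ and $h_n$ compete inside a logarithm. In free space the kernel $\tfrac{1}{2\pi}K_0(k|x-y|)$ blows up like $|\ln k|$ as $k\to 0^-$, which is exactly why $\beta_{cr}=0$ for the Neumann condition in $d=2$; the Dirichlet condition replaces $R_0$ by $\tfrac{1}{2\pi}\big(K_0(k|x-y|)-K_0(k|x-y^\ast|)\big)$, whose zero-energy limit is the finite kernel $\tfrac{1}{2\pi}\ln\!\big(|x-y^\ast|/|x-y|\big)$, with $y^\ast$ the reflection of $y$ across the boundary. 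I would estimate this kernel on $\mathrm{supp}\,V_n$: when $h_n\gg\varepsilon_n$ one has $|x-y^\ast|\approx 2h_n$ and $|x-y|\lesssim\varepsilon_n$, so the kernel is $\asymp\ln(h_n/\varepsilon_n)=\ln(n h_n)$ and $M_n\gtrsim 1$ (indeed grows), i.e.\ $\beta_{cr}$ stays bounded; this is the regime $h_n>C/n^\delta$, $\delta\in(0,1)$, where $n h_n\to\infty$. When instead $h_n\lesssim\varepsilon_n$, i.e.\ $h_n<C/n$ with $n h_n$ bounded, the image point sits on top of the support, the logarithmic enhancement is destroyed, the problem behaves like the one-dimensional case in the normal direction, and $M_n\asymp n h_n\to 0$, giving $\beta_{cr}\to\infty$.

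The main obstacle is precisely this two-dimensional estimate: the difficulty is to turn the heuristic ``$M_n\asymp\ln(n h_n)$ for $n h_n\gtrsim 1$ and $M_n\asymp n h_n$ for $n h_n\lesssim 1$'' into rigorous two-sided bounds on the operator norm that are uniform in $n$, and to justify interchanging the limits $\lambda\to 0^-$ and $n\to\infty$. Concretely, after the substitution $x=x(n)+\varepsilon_n\xi$ rescaling the support to unit size, one must bound $\|V_n^{1/2}R_0(\lambda)V_n^{1/2}\|$ above and below by quantities depending on $n$ only through the single parameter $n h_n$ and the fixed profile of $V$, controlling the integrable logarithmic diagonal singularity uniformly throughout the rescaling. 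Once this two-sided bound in $n h_n$ is in hand, the dichotomy in the statement follows by reading off the cases $n h_n\to\infty$ versus $n h_n$ bounded.
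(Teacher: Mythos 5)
Your proposal cannot really be checked against ``the paper's own proof,'' because the paper has none: the statement is \thmref{tlast}, quoted verbatim from \cite{PV} as background, and the only thing the present paper says about its proof is the introductory remark that in \cite{PV} ``$\beta_{cr}$ was expressed through truncated resolvent operator'' and that everything hinges on the behavior of that operator as $\lambda\to 0^-$. Your Birman--Schwinger reduction $\beta_{cr}=M_n^{-1}$ with $M_n=\lim_{\lambda\to 0^-}\|V_n^{1/2}R_0(\lambda)V_n^{1/2}\|$ is exactly that strategy, and your zero-energy kernels ($\min(x,y)$ for $d=1$ Dirichlet, $\frac{1}{2\pi}\ln\big(|x-y^\ast|/|x-y|\big)$ for $d=2$ Dirichlet, the free kernel plus an image term for $d\geq 3$) are the right objects. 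So the skeleton matches the route described for \cite{PV}.

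The genuine gap is that your four case-arguments are carried out under mutually incompatible normalizations of $V_n$, a choice the quoted statement leaves implicit (it is fixed in \cite{PV}) and to which every one of your estimates is quantitatively sensitive. Write $h_n=|x(n)-x_0|$, support diameter $\sim 1/n$, amplitude $A_n$. If $A_n\equiv 1$, your $d=1$ bound and your conclusion $M_n\to 0$ for $d=2$, $h_n<C/n$ are fine, but the rescaling $x=x(n)+\xi/n$ gives $M_n\asymp n^{-2}\ln(nh_n)$ in the regime $d=2$, $h_n>Cn^{-\delta}$, and $M_n\asymp n^{-2}$ for $d\geq3$; both tend to $0$, so your claims ``$M_n\gtrsim 1$'' and ``$M_n$ keeps a positive lower bound'' fail there. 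If instead $A_n=n^2$ (the scale-invariant normalization, under which those two conclusions do hold), then in $d=1$ one computes $M_n\asymp nh_n+O(1)\gtrsim 1$, so $\beta_{cr}$ stays bounded and your $d=1$ argument collapses: ``the fixed profile contributes only a bounded factor'' is false, since then $\|V_n\|_{L^1}\sim n$. Thus under any single normalization at least one of your cases breaks; as written, each case silently uses whichever scaling makes it easy. Two further problems: in the regime $h_n<C/n$ the hypothesis gives only that $nh_n$ is \emph{bounded}, not that $nh_n\to 0$, so ``$M_n\asymp nh_n\to 0$'' does not follow (when $nh_n$ stays comparable to a constant, the image point does not sit on top of the support and the logarithm is not ``destroyed'' without an actual estimate); and the rigorous two-sided $d=2$ bounds, which you explicitly defer as ``the main obstacle,'' are not a technicality --- they are the entire content of the dichotomy, so what you have is an outline of the correct strategy rather than a proof.
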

\section{Schr\"odinger operator with Delta Potential}
In this paper, we would present results on the beta critical in the case of one and two dimensional Schr\"odinger equation with delta potential given by
\begin{equation}\label{eq0}
    -y^{''}-\beta \delta(x-a) y(x)=\lambda y(x),
\end{equation}
where $\beta \geq 0, \ a >0$.  The delta function is a infinitely high, infinitesimally narrow spike at $x=a$.  This allows solutions for both the bound states $\lambda <0$ and scattering states $\lambda >0.$  The classification of the spectrum into discrete and continuous parts usually corresponds to a classification of the dynamics into localized (bound) states and locally decaying states when time increases (scattering), respectively. The lower bound, $0$, of the absolutely continuous spectrum  is called the ionization threshold. This follows from the fact that the particle is no longer localized, but moves freely when $\lambda>0$. This classification is related to the space-time behaviour of solutions of the corresponding Schr\"odinger equation.\\We are interested to study the beta critical, the critical value the coupling constant denoted by $\beta_{cr},$ the value of $\beta$ such that equation (\ref{eq0}) does not have negative eigenvalues for $\beta<\beta_{cr}$ and has them if $\beta>\beta_{cr}$. We can find the solution to Schrodinger equation (\ref{eq0}) in the region I and II as shown in the figure 1. 

\begin{figure}[h!]
    \centering
    \includegraphics[width=9cm, height=4cm]{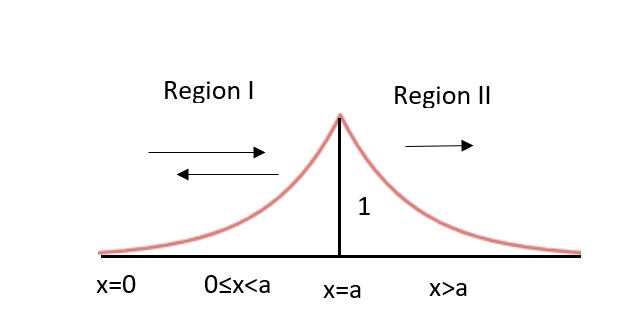}
    \caption{Delta Potential at x=a.}
    \label{fig:my_label}
\end{figure}
In both regions: region I or  $0\leq x<a$ and region 2 or $x>a$, the potential is $V(x)=0.$ 
\begin{theorem}\label{thm1}
If we consider the following one dimensional Schr\"odinger equation in the half axis 
\begin{equation}{\label{eq1}}
-y^{''}-\beta \delta(x-a) y(x)=\lambda y(x)
, \ \lambda= -k^2 < 0 \ x\in[0,\infty).
\end{equation}
 then the $\beta_{cr}>0$ in the case of Dirichlet Boundary condition and $\beta_{cr}=0$ in the case of Neumann and Robin boundary condition. 
\end{theorem}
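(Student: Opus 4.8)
The plan is to solve \eqref{eq1} explicitly on each side of the singularity and read off the condition for a bound state. On region I ($0\le x<a$) and region II ($x>a$) the potential vanishes, so with $\lambda=-k^2$ the equation reduces to $y''=k^2y$, whose solutions are combinations of $e^{\pm kx}$. Square-integrability on $[0,\infty)$ forces the region-II solution to be the purely decaying one, $y_{II}(x)=Ce^{-kx}$. In region I the boundary condition at the origin selects a one-parameter family: $y_I(x)=\alpha\sinh(kx)$ for Dirichlet ($y(0)=0$), $y_I(x)=\alpha\cosh(kx)$ for Neumann ($y'(0)=0$), and $y_I(x)=\alpha\bigl(\cosh(kx)+\tfrac{h}{k}\sinh(kx)\bigr)$ for Robin ($y'(0)=h\,y(0)$). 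Matching at $x=a$ uses two conditions: continuity $y_I(a)=y_{II}(a)$, and the jump obtained by integrating \eqref{eq1} across $x=a$, namely $y'_{II}(a)-y'_I(a)=-\beta\,y(a)$. Eliminating $\alpha$ and $C$ from this homogeneous $2\times2$ system and demanding a nontrivial solution yields a single \emph{secular equation} $\beta=\beta(k)$, recording for each $k>0$ the coupling strength at which $\lambda=-k^2$ is an eigenvalue. Since a negative eigenvalue exists precisely when $\beta$ lies in the range of $\beta(\cdot)$, the critical value is $\beta_{cr}=\inf_{k>0}\beta(k)$.

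For the Dirichlet case this gives $\beta(k)=\dfrac{2k}{1-e^{-2ka}}$. Writing $u=2ka$ one has $\beta=\tfrac1a\,g(u)$ with $g(u)=\dfrac{u}{1-e^{-u}}$; a short computation shows $g'(u)>0$ for $u>0$ (the numerator of $g'$ is $1-e^{-u}(1+u)$, which vanishes at $0$ and increases thereafter), so $\beta(k)$ rises monotonically from $\beta(0^+)=\tfrac1a$ to $+\infty$. Hence a bound state exists iff $\beta>\tfrac1a$, giving $\beta_{cr}=\tfrac1a>0$. For the Neumann case the same steps produce $\beta(k)=\dfrac{2k}{1+e^{-2ka}}$, which increases from $\beta(0^+)=0$; thus every $\beta>0$ yields a bound state and $\beta_{cr}=0$. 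This contrast is exactly the threshold behaviour quoted in the introduction: the free Dirichlet resolvent stays bounded as $\lambda\to0^-$ (its kernel tends to $\min(x,y)$ on compacts), whereas the free Neumann resolvent diverges like $1/k$, and it is this divergence that pushes $\beta_{cr}$ down to $0$.

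For Robin the secular equation works out to $\beta(k)=\dfrac{k(k+h)e^{ka}}{k\cosh(ka)+h\sinh(ka)}$, and the decisive feature is the factor $(k+h)$. For the attractive Robin condition the free operator already carries a bound state: $y(x)=e^{-\gamma x}$ solves \eqref{eq1} with $\beta=0$ under $y'(0)=-\gamma\,y(0)$, $\gamma>0$, so a negative eigenvalue is present already at $\beta=0$; since deepening the well only lowers eigenvalues (the quadratic form $\int|y'|^2-\beta|y(a)|^2$ is decreasing in $\beta$), negative spectrum persists for all $\beta>0$ and $\beta_{cr}=0$. Equivalently, the curve $\beta(k)$ crosses $\beta=0$ at the positive value $k=\gamma$, so $\inf_{k>0}\beta(k)=0$. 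I expect the Robin analysis to be the main obstacle: unlike the Dirichlet and Neumann cases, where $\beta(k)$ is manifestly monotone with a clean threshold limit, the Robin secular function depends on the sign of the boundary parameter, and one must argue carefully that it is the attractive condition that forces $\beta_{cr}=0$, isolating the role of the free bound state. The remaining step throughout—verifying monotonicity of each $\beta(k)$ so that the infimum is genuinely realized as $k\to0^+$—is routine.
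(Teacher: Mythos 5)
Your proposal is correct, and its core computation is the same as the paper's: explicit solutions in the two regions, continuity at $x=a$, and the jump condition $y'_{II}(a)-y'_I(a)=-\beta\,y(a)$ obtained by integrating across the spike, leading to a secular equation. Where you genuinely diverge is in how the conclusion is extracted from that equation. The paper derives $k+k\coth(ka)=\beta$ (Dirichlet) and $k+k\tanh(ka)=\beta$ (Neumann), rewrites them as $e^{-z}=1-z/B$ and $e^{-z}=z/B-1$, and argues only through the sign constraints $k\le\beta/2$ versus $k\ge\beta/2$; as written, the step from ``no negative eigenvalue at $\beta=0$'' to ``$\beta_{cr}>0$'' is logically thin (the premise holds in the Neumann case too, where $\beta_{cr}=0$), and the paper only pins down the actual Dirichlet threshold in the \emph{next} theorem, via the solvability condition $1/B\le 1$ for equation (\ref{eqz}). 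Your treatment of $\beta(k)=2k/(1-e^{-2ka})$ as a monotone function with range $(1/a,\infty)$ closes that gap and produces $\beta_{cr}=1/a$ inside the proof itself, consistent with the paper's later theorem. For Robin the divergence is larger: the paper, for its specific condition $y'(0)+y(0)=0$, grinds through the algebra of $k+k(k-\coth ka)/(k\coth ka-1)=\beta$ and an opaque inequality, whereas you observe that $e^{-x}$ is already an eigenfunction of the free Robin operator (eigenvalue $-1$) and invoke monotonicity of the quadratic form in $\beta$ together with the stability of the essential spectrum $[0,\infty)$; this is cleaner, manifestly rigorous, and correctly isolates the fact that the theorem's blanket ``Robin'' claim really concerns the attractive sign of the boundary parameter --- indeed your own secular function $\beta(k)=k(k+h)e^{ka}/(k\cosh ka+h\sinh ka)$ shows that for a repulsive condition $y'(0)=hy(0)$, $h>0$, the threshold is strictly positive, so the sign analysis you flag as the ``main obstacle'' is exactly the right caveat. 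One cosmetic correction: the Robin quadratic form should carry the boundary term as well, i.e.\ $\int_0^\infty|y'|^2\,dx-|y(0)|^2-\beta|y(a)|^2$ for the paper's condition, though this strengthens rather than weakens your monotonicity argument.
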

\begin{proof}

The Dirichlet problem is given by
\begin{equation}{\label{eq4}}
-y^{''}-\beta \delta(x-a) y(x)=\lambda y(x),\ y(0)=0, \ y(a)=1
, \ \lambda= -k^2 < 0.
\end{equation}
The solution of the problem (\ref{eq4}) is given by $$y(x)  = Pe^{-kx} +Qe^{kx}. $$ We can determine the value of constant P and Q by using the given condition $ y(0)=0,  y(a)=1$ with the continuity of the solution. Actually we can divide the solution of the problem (\ref{eq4}) in to two different regions.\\
\[
\begin{cases}
   y_1(x)= \frac{\sinh kx}{\sinh ka},& \text{if } 0\leq x\leq a\\
  y_2(x)=  e^{k(a-x)},& \text{if } a\leq x.
\end{cases}
\]
We integrated the Schr\"odinger equation with respect to $x$ over a small interval $\Delta \epsilon$. 
$$\int_{a-\epsilon}^{a+\epsilon}(-y^{''}-\beta \delta(x-a) y \ )dx= \int_{a-\epsilon}^{a+\epsilon}(\lambda y)dx.$$
 The integral of the second derivative is just the first derivative function and the integral over the function in the right side goes to zero, since it is a continuous, single valued function.  We get,
 $$ -y^{'}|_{a-\epsilon}^{a+\epsilon}-\beta y(a)=0.$$
When $\epsilon \xrightarrow{}0,$
 $ k+k \coth ka =\beta.$
which gives, 
$ \frac{k}{\beta}=\frac{1}{1+ \coth ka}$  and $  \frac{ka}{\beta a}=\frac{1}{1+ \coth ka}.$
Let $ka=A, \beta a= B,$ then $e^{-2A}=1-\frac{2A}{B}.$
Again let $2A=z, $ then we have, 
\begin{equation}{\label{eqz}}
 e^{-z}=1-\frac{z}{B}.   
\end{equation}
From equation (\ref{eqz}) $1-\frac{z}{B}\geq 0.$ Hence, we get $\frac{\beta}{2}\geq k ,$ and it follows
$\frac{\beta^2}{4}\geq k^2.$ 
Since $\lambda =-k^2 \geq \frac{-\beta^2}{4}.$ 
Hence, if $\beta=0$ then there is no possibility of having negative eigenvalues so $\beta_{cr}$ must be greater than zero to produce negative eigenvalues. \\


If we consider the Neumann boundary condition then the equation (\ref{eq1}) becomes
\begin{equation}{\label{eq6}}
-y^{''}-\beta \delta(x-a) y(x)=\lambda y(x),\ y^{'}(0)=0, \ y(a)=1
, \ \lambda= -k^2 < 0,k>0.
\end{equation}

As above, we can divide the solution of this problem in to two different regions, i.e region (I) with $0\leq x <a$ and region (II) with $a<x$. 
\[
\begin{cases}
   y_1(x)= \frac{\cosh kx}{\cosh ka},& \text{if } 0\leq x\leq a\\
  y_2(x)=  e^{k(a-x)},& \text{if } a\leq x.
\end{cases}
\]
We will again integrate the Schr\"odinger equation with respect to $x$ over a small interval and take $\epsilon \xrightarrow{}0.$ Then,
$ k+k \tanh ka =\beta.$ 
Which gives, 
$ \frac{k}{\beta}=\frac{1}{1+ \tanh ka} $
and 
$\frac{ka}{\beta a}=\frac{1}{1+ \tanh ka}.$
Let $ka=A,\ \beta a= B$ then $e^{-2A}=\frac{2A}{B}-1.$
Again let $2A=z, $ then  $ e^{-z}=\frac{z}{B}-1.$ 
It tells us that $\frac{z}{B}-1\geq 0.$ From here, we get the following relation $\frac{\beta}{2}\leq k $ which follows 
$\frac{\beta^2}{4}\leq k^2.$ Since $\lambda =-k^2 \leq \frac{-\beta^2}{4}.$ 
Hence, if $\beta=0$ then there is still a possibility of having a negative eigenvalues so $\beta_{cr}=0.$ \\


If we consider the Robin boundary condition then the equation (\ref{eq1}) becomes
\begin{equation}{\label{eq7}}
-y^{''}-\beta \delta(x-a) y(x)=\lambda y(x),\ \frac{dy}{dx}+y\bigg|_{x=0}=0, \ y(a)=1
, \ \lambda= -k^2 < 0.
\end{equation}
We can divide the solution of this problem in to two different region as described below.

\[
\begin{cases}
   y_1(x)= \frac{k\cosh kx-\sinh kx}{k\cosh ka-\sinh ka},& \text{if } 0\leq x\leq a\\
  y_2(x)=  e^{k(a-x)},& \text{if } a\leq x.
\end{cases}
\]
As above, we will integrate the Schr\"odinger equation with respect to $x$ over a small interval $\Delta \epsilon$.
$$\int_{a-\epsilon}^{a+\epsilon}(-y^{''}-\beta \delta(x-a) y \ )dx= \int_{a-\epsilon}^{a+\epsilon}(\lambda y)dx.$$
After solving this integral problem, we come up with the following equation when $ \epsilon \xrightarrow{}0,$ 
$$ k+k \bigg(\frac{k-\coth ka}{k\coth ka-1}\bigg) =\beta.$$
which gives,
$$ \frac{k}{\beta}=\frac{1}{1+ \bigg(\frac{k-\coth ka}{k\coth ka-1}\bigg)} ,$$
and 
$$ \frac{ka}{\beta a}=\frac{1}{1+ \bigg(\frac{k-\coth ka}{k\coth ka-1}\bigg)}.$$
Let $ka=A, \beta a= B$ then $$e^{-2A}=\frac{2A^2-2Aa-AB+Ba}{AB+Ba}.$$
It tells us that $\frac{2A^2-2Aa-AB+Ba}{AB+Ba}$ must be $\geq 0.$ After solving this inequality, we get the following relation $$\frac{\beta}{2}\leq k, $$ which tells us that $\frac{\beta^2}{4}\leq k^2.$ Since $\lambda =-k^2 \leq \frac{-\beta^2}{4}.$ Hence if $\beta=0$ then there is a possibility of having negative eigenvalues so $\beta_{cr}$ must be zero. 
\end{proof}
Now, It will be shown that the beta critical goes to infinity in the case of Dirichlet boundary condition in the dimension one considering delta potential. We will not consider the Neumann and Robin boundary condition since $\beta_{cr}=0.$
 
\begin{theorem}
Consider the Dirichlet boundary condition where the delta potential is located at $x=a_n$. 
\begin{equation}{\label{eq8}}
-y^{''}-\beta \delta(x-a_n) y(x)=\lambda y(x),\ y(0)=0, \ y(a_n)=1
, \ \lambda= -k^2 < 0,k>0, a_n>0.
\end{equation}
Then, the $\beta_{cr} \xrightarrow{} \infty $ \text{as}\ $ a_n \xrightarrow{}0.$
\end{theorem}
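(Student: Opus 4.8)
The plan is to reuse the jump (matching) condition derived in the proof of \thmref{thm1} for the Dirichlet problem, now with the potential sitting at $x=a_n$. Repeating the integration of the equation across $x=a_n$ verbatim gives $\beta = k\bigl(1+\coth(ka_n)\bigr)$, equivalently, after setting $z=2ka_n$ and $B=\beta a_n$, exactly equation (\ref{eqz}):
\begin{equation*}
e^{-z}=1-\frac{z}{B}.
\end{equation*}
A genuine negative eigenvalue $\lambda=-k^2$ (with $k>0$) exists precisely when this relation admits a root $z>0$; note that $z=0$ is always a solution but corresponds to $k=0$, i.e.\ the edge of the continuous spectrum, and must be discarded. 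Thus the whole problem reduces to deciding, as a function of $B=\beta a_n$, whether (\ref{eqz}) has a strictly positive root.

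Next I would analyze the auxiliary function $h(z)=e^{-z}-1+z/B$ on $[0,\infty)$. One has $h(0)=0$ and $h'(z)=-e^{-z}+1/B$. When $B\le 1$ we have $1/B\ge 1>e^{-z}$ for every $z>0$, so $h'>0$, $h$ is strictly increasing, and hence $h(z)>h(0)=0$ for all $z>0$: there is no positive root and no bound state. When $B>1$ we have $h'(0)=1/B-1<0$, so $h$ decreases on an initial interval and becomes negative, while $h(z)\to+\infty$ as $z\to\infty$ because the linear term $z/B$ dominates; by the intermediate value theorem $h$ then has a (unique) positive root. Therefore a negative eigenvalue exists if and only if $B=\beta a_n>1$.

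Equivalently, writing the condition as $\beta=f(k)$ with $f(k)=k\bigl(1+\coth(ka_n)\bigr)$, the same conclusion follows by showing $f$ is strictly increasing on $(0,\infty)$ with $f(0^+)=1/a_n$ and $f(+\infty)=+\infty$; the monotonicity rests on the elementary inequality $\sinh(2u)>2u$ for $u>0$ (which makes $\tfrac12\sinh 2u-u>0$ in the numerator of $f'$), and the limit $f(0^+)=1/a_n$ comes from $t\coth t\to 1$ as $t\to 0^+$. Either route yields the threshold
\begin{equation*}
\beta_{cr}=\frac{1}{a_n}.
\end{equation*}
Letting $a_n\to 0^+$ then gives $\beta_{cr}\to\infty$, which is the assertion.

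The only delicate point is the second step: one must locate the exact critical value $B=1$ by tracking the sign of $h'$, and the subtlety is that $z=0$ is an omnipresent but spurious root, so the onset of a bound state is detected by the transition of $h'(0)=1/B-1$ through zero rather than by a sign change of $h$ itself. Once $\beta_{cr}=1/a_n$ is pinned down, the stated limit is immediate, and it is consistent with the bound $\beta_{cr}>0$ already recorded in \thmref{thm1}.
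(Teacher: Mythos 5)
Your proposal is correct and follows essentially the same route as the paper: reduce to the transcendental relation $e^{-z}=1-z/B$ from the proof of Theorem~\ref{thm1}, identify $B=\beta a_n=1$ as the threshold for a positive root, and conclude $\beta_{cr}=1/a_n\to\infty$. In fact your analysis of $h(z)=e^{-z}-1+z/B$ (and the remark that $z=0$ is a spurious root corresponding to $k=0$) supplies the justification for the solvability dichotomy that the paper merely asserts, so it is a strictly more complete version of the same argument.
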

\begin{proof}
We will not have a solution of the equation (\ref{eqz}) if $\frac{1}{B}\geq 1.$ 
That means there is no negative eigenvalues when $ \frac{1}{a}\geq \beta. $
However, if we have the case $\frac{1}{B}\leq 1$ then we would have a solution of the equation (\ref{eqz}). It tells us that when $\frac{1}{ a}\leq \beta,$ we will have the existence of negative eigenvalues. As we defined $\beta_{cr},$ the value of $\beta$ such that equation (\ref{eq4}) does not have negative eigenvalues for $\beta<\beta_{cr}$ and has them if $\beta>\beta_{cr}$, we conclude that $\beta_{cr}=\frac{1}{a}$ for the equation (\ref{eqz}).\\ Similarly, if we consider the case where the delta potential is located at $x=a_n$ where  $ a_n \xrightarrow{}0$  as $n \xrightarrow{}\infty,$ given by (\ref{eq8}) then $$\beta_{cr}=\frac{1}{a_n}.$$
If we approach the potential towards to the boundary which is $a_n \xrightarrow{}0.$ We get,
$\beta_{cr} \xrightarrow{} \infty $ as $a_n \xrightarrow{} 0.$ 
\end{proof}
\section{Dirichlet Boundary condition for $d=2$ with delta potential on the circle} 
 \begin{theorem}\label{thm2}
If we consider the case of Dirichlet Boundary condition for $d=2$ with delta potential on the circle is given by 
\begin{equation}{\label{eq40}}
-\Delta y(x)-\beta \delta_{1+a} y(x)=\lambda y(x),\ y(1)=0, \ y(1+a)=1
, \ \lambda= -k^2 < 0, k>0.
\end{equation}
then the $\beta_{cr}>0$ and $\beta \in (0,\frac{1}{2}).$
\end{theorem}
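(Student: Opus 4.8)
The plan is to exploit the full rotational symmetry of the data: both the Dirichlet circle $\{|x|=1\}$ and the support $\{|x|=1+a\}$ of the potential are centered at the origin, so it suffices to look for radial bound states $y=y(r)$, $r=|x|$. Writing the Laplacian in polar coordinates, away from $r=1+a$ the equation $-\Delta y=\lambda y$ with $\lambda=-k^2$ becomes the modified Bessel equation
\begin{equation*}
y''+\frac{1}{r}y'-k^2y=0 ,
\end{equation*}
whose general solution is $A\,I_0(kr)+B\,K_0(kr)$. First I would record the two ingredients dictated by the boundary behaviour. In region I, $1\le r\le 1+a$, I choose the combination that vanishes at $r=1$ and is normalized at $r=1+a$, namely
\begin{equation*}
y_1(r)=\frac{I_0(kr)K_0(k)-K_0(kr)I_0(k)}{I_0(k(1+a))K_0(k)-K_0(k(1+a))I_0(k)} ;
\end{equation*}
in region II, $r\ge 1+a$, the requirement that a bound state decay at infinity forces the $K_0$ branch, so $y_2(r)=K_0(kr)/K_0(k(1+a))$.

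Next I would derive the matching condition across the shell. Integrating (\ref{eq40}) over a thin annulus $1+a-\epsilon\le r\le 1+a+\epsilon$ and letting $\epsilon\to 0$, the first-order term $\frac{1}{r}y'$ and the right-hand side are integrable and drop out, while the circular delta contributes $\beta\,y(1+a)$ (the circumference factor $1+a$ cancels against the area element in the weak formulation). Exactly as in the one-dimensional cases this produces the jump relation
\begin{equation*}
y_1'(1+a)-y_2'(1+a)=\beta\,y(1+a)=\beta ,
\end{equation*}
which, after inserting $y_1,y_2$ and using $I_0'=I_1$, $K_0'=-K_1$, becomes a single transcendental equation $\beta=\beta(k)$ relating the coupling to the binding wave number. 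This is the two-dimensional analogue of (\ref{eqz}).

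Finally I would locate $\beta_{cr}$ by passing to the ionization threshold $k\to 0^+$; the cleanest equivalent route is to solve $-\Delta y=0$ directly, using the radial harmonic functions $A+B\ln r$, which makes the borderline nature of dimension two transparent. Using the small-argument asymptotics $I_0(z)\to 1$, $I_1(z)\sim z/2$, $K_0(z)\sim-\ln z$, $K_1(z)\sim 1/z$, the exterior slope satisfies $y_2'(1+a)\to 0$ because of the logarithmic growth of $K_0$, whereas the interior slope tends to $\dfrac{1}{(1+a)\ln(1+a)}$. Hence
\begin{equation*}
\beta_{cr}=\lim_{k\to 0^+}\beta(k)=\frac{1}{(1+a)\ln(1+a)} ,
\end{equation*}
which is strictly positive and settles the first assertion. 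To be sure $\beta_{cr}$ is genuinely the threshold — bound states present for $\beta>\beta_{cr}$ and absent for $\beta<\beta_{cr}$ — I would verify that $\beta(k)$ is increasing in $k$, using the monotonicity of the ratios $I_1/I_0$ and $K_1/K_0$, so that $\beta_{cr}$ is the infimum of admissible couplings.

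I expect the main obstacle to be the last analytic step rather than the algebra of the jump condition. It is precisely the logarithmic, borderline behaviour of $K_0$ at $k=0$ that forces the two-dimensional Dirichlet problem to bind only for $\beta$ bounded away from zero, and controlling this $k\to 0$ expansion carefully is what simultaneously yields positivity and the upper bound: the inequality $\beta_{cr}<\tfrac12$ then reduces to the elementary estimate $(1+a)\ln(1+a)>2$ for the explicit expression above, and identifying the precise range of $a$ in which the stated bound $\beta\in(0,\tfrac12)$ holds is the delicate point of the argument.
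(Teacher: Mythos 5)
Your proposal departs from the paper's proof at two decisive points, and at the first of them you are the more careful one. For $\lambda=-k^2<0$ the radial equation away from the shell is $y''+\frac{1}{r}y'-k^2y=0$, whose solutions are the modified Bessel functions $I_0(kr),K_0(kr)$; this is exactly the basis you use in region I. The paper instead writes the interior solution with the oscillatory functions $J_0(kr),Y_0(kr)$ (which solve $y''+\frac{1}{r}y'+k^2y=0$), inconsistently with its own exterior solution $K_0(kr)$; it then reduces the jump condition (\ref{eq125}) to (\ref{eq126}), $-1+K_1/K_0=\beta/k$, by asserting that the interior term $g(k,a)$ equals $-1$ identically, an assertion supported only by the plotted figures, and finishes with the cited two-sided bound on $K_1/K_0$ to conclude $0<\beta<\frac{1}{2(1+a)}<\frac12$. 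That identity is in fact not true: with the paper's $J,Y$ interior one gets $g(k,a)\approx\cot(ka)$ as $k\to\infty$, while with the correct $I,K$ interior the analogous quantity behaves like $-\frac{1}{k(1+a)\ln(1+a)}$ as $k\to 0^+$. Your finish is different and sounder: you pass to $k\to 0^+$ in the correct transcendental equation and obtain the explicit threshold $\beta_{cr}=\frac{1}{(1+a)\ln(1+a)}>0$.

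Two points remain on your side. First, identifying $\beta_{cr}$ with $\lim_{k\to0^+}\beta(k)$ presupposes monotonicity of $\beta(k)$, which you flag but do not prove. For the qualitative claim $\beta_{cr}>0$ you can avoid it: $\beta(k)=y_1'(1+a)-y_2'(1+a)$ is a sum of two strictly positive terms, is continuous on $(0,\infty)$, tends to $\frac{1}{(1+a)\ln(1+a)}>0$ as $k\to0^+$ and grows like $2k$ as $k\to\infty$, so $\inf_{k>0}\beta(k)$ --- which is the true $\beta_{cr}$, since the set of couplings admitting a bound state is the range of $\beta(\cdot)$ --- is strictly positive. Second, and more seriously, the assertion $\beta\in(0,\frac12)$ cannot be obtained by your argument, and your own formula shows why: $\frac{1}{(1+a)\ln(1+a)}\to\infty$ as $a\to0^+$, so for small $a$ every coupling producing a bound state exceeds $\frac12$; the bound $\beta_{cr}<\frac12$ holds only when $(1+a)\ln(1+a)>2$. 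So your proposal proves the first assertion of Theorem \ref{thm2} by a different and more rigorous route, but it contradicts, rather than proves, the second assertion; the discrepancy traces back precisely to the paper's unproven graphical identity $g\equiv-1$, and you should say this explicitly instead of leaving it as ``the delicate point.''
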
 
\begin{figure}[h!]
    \centering
    \includegraphics[width=0.5\textwidth]{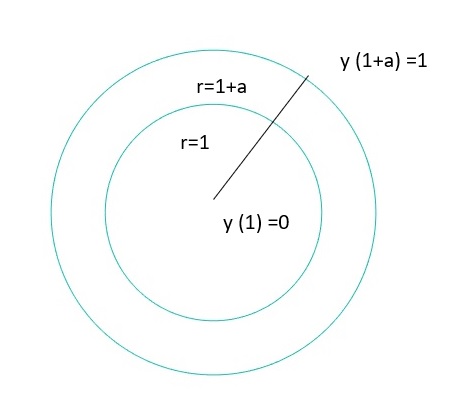}
    \caption{Delta potential in dimension 2.}
    \label{Dellta Potential in d=2.}
\end{figure}
The rotational invariance suggests that the two dimensional Laplacian should take a particularly simple form in polar coordinates. We use polar coordinates $(r, \theta)$ and look for solutions depending only on r. Thus equation (\ref{eq40}) becomes
\begin{equation}{\label{eq10}}
  y^{''}+\frac{y'}{r}-\beta  \delta_{1+a} \ y(r)  =\lambda y(r),\ y(1)=0, \ y(1+a)=1 , \\
   \lambda= -k^2 < 0, k>0.
\end{equation}
\begin{proof}
We will divide the solution of this problem (\ref{eq10}) into two different regions, i.e region (I) with $1\leq r <1+a$ and region (II) with $1+a<r$. 
\[
\begin{cases}
   y_1(r)= \frac{Y_0(k)J_0(kr)-J_0(k)Y_0(kr)}{Y_0(k)J_0(k(1+a))-J_0(k)Y_0(k(1+a))},& \text{if } 1\leq r\leq 1+a\\
  y_2(r)=  \frac{K_0(kr)}{K_0(k(1+a))},& \text{if } 1+a\leq r.
\end{cases}
\]
As above, $$ -y^{'}\bigg|_{1+a-\epsilon}^{1+a+\epsilon}-\beta y(1+a)=0.$$ 
Which gives,
$$ -\big(y_2^{'}|_{1+a+\epsilon}-y_1^{'}|_{1+a-\epsilon}\big)=\beta.$$


When $\epsilon \xrightarrow{}0,$
\begin{equation}{\label{eq125}}
 \frac{-Y_0(k)J_1(k(1+a)+J_0(k)Y_1(k(1+a))}{Y_0(k)J_0(k(1+a))-J_0(k)Y_0(k(1+a))}-\frac{-K_1(k(1+a))}{K_0(k(1+a)}=\frac{\beta}{k}.   
\end{equation}
Where $Y_0$ and $Y_1$ are Bessel function of second kind and and $J_0$ and $J_1$ are Bessel function of first kind. Similarly, $K_0$ and $K_1$ are a modified Bessel function of second kind. Define 
$$ g(k,a)=\frac{g_1(k,a)}{g_2(k,a)}=\frac{-Y_0(k)J_1(k(1+a)+J_0(k)Y_1(k(1+a))}{Y_0(k)J_0(k(1+a))-J_0(k)Y_0(k(1+a))}.$$ It seems that $g(k,a)=-1$ for all the values of $a$  as shown in the figure below.

\begin{figure} [h!]
   \begin{minipage}{0.5\textwidth}
     \centering
\includegraphics[width=1\textwidth]{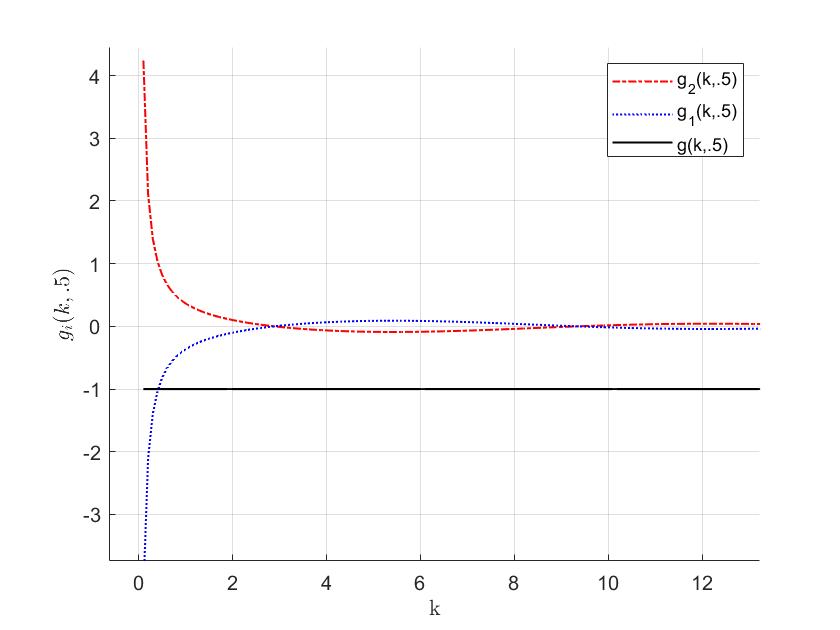}  
\caption{Graph of the function g(k,.5).}
\label{fig:case1} 
   \end{minipage}\hfill
   \begin{minipage}{0.5\textwidth}
     \centering
\includegraphics[width=1\textwidth]{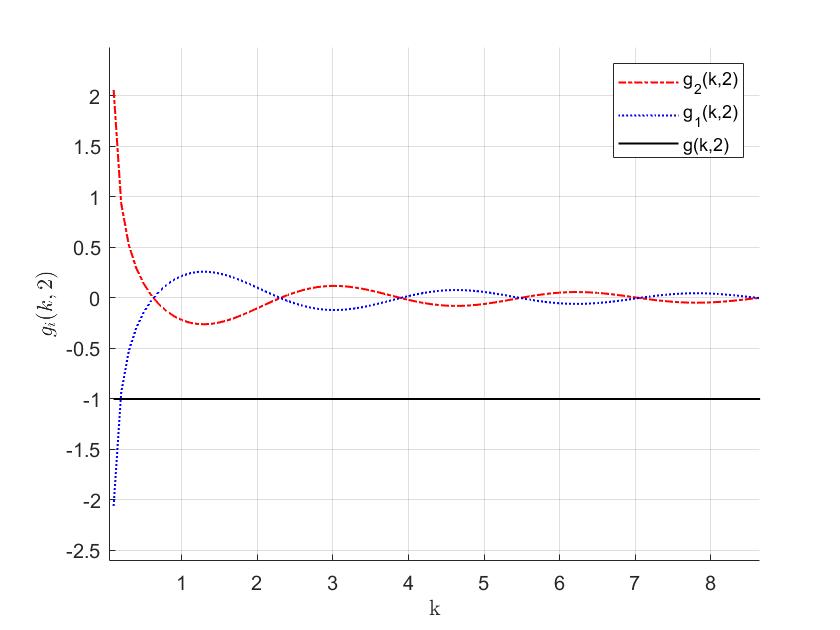}
\caption{Graph of the function g(k,2).}
\label{fig:case1sln} 
   \end{minipage}
\end{figure}

\begin{figure} [h!]
   \begin{minipage}{0.5\textwidth}
     \centering
\includegraphics[width=1\textwidth]{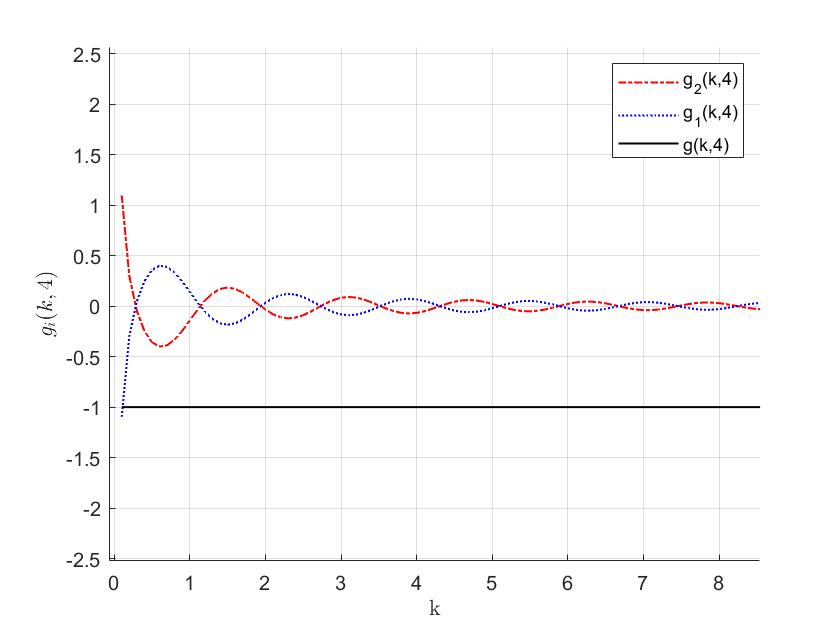}  
\caption{Graph of the function g(k,4).}
\label{fig:case1} 
   \end{minipage}\hfill
   \begin{minipage}{0.5\textwidth}
     \centering
\includegraphics[width=1\textwidth]{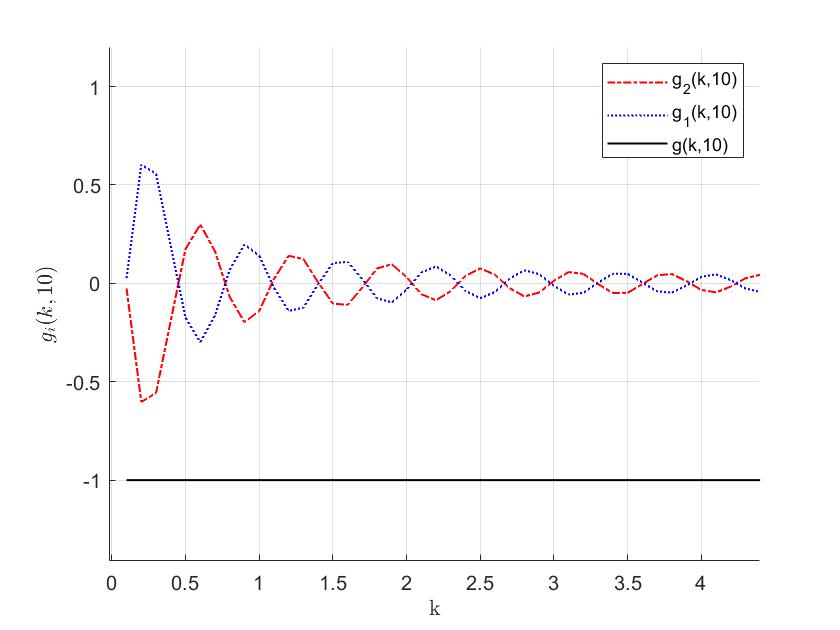}
\caption{Graph of the function g(k,10).}
\label{fig:case1sln} 
   \end{minipage}
\end{figure}

From  equation (\ref{eq125}), we get\\
\begin{equation}{\label{eq126}}
    -1+\frac{K_1(k(1+a))}{K_0(k(1+a)}=\frac{\beta}{k}.
\end{equation}
We will use the following fact from \cite{mbf} to prove that $\beta_{cr}>0.$ 
\begin{lemma}[\cite{mbf}]
Let $p, q\geq 0.$ Then the double inqualities 
\begin{equation}{\label{eq130}}
   1+\frac{1}{2(x+p)}<\frac{K_1(x)}{K_0(x)}<1+\frac{1}{2(x+q)} 
\end{equation}
hold for all $x>0$ if only if $p\geq 1/4$ and $q=0.$
\end{lemma}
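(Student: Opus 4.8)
The plan is to collapse both inequalities onto the single function $w(x)=K_1(x)/K_0(x)$ and to compare it against the one–parameter family $\phi_c(x)=1+\frac{1}{2(x+c)}$, since the claimed bounds are precisely $\phi_p(x)<w(x)<\phi_q(x)$, and $\phi_c$ is strictly decreasing in $c$. The first step is to record that $w$ solves a Riccati equation: from the standard recurrences $K_0'=-K_1$ and $K_1'=-K_0-\frac1x K_1$, differentiating the quotient gives $w'(x)=w(x)^2-\frac{w(x)}{x}-1$. I would then measure how far each trial function is from solving this equation by its defect $D[\phi_c]:=\phi_c'-\bigl(\phi_c^2-\tfrac1x\phi_c-1\bigr)$, and a direct computation collapses to the compact form
\[
 D[\phi_c](x)=\frac{(4c-1)\,x+2c(2c+1)}{4x(x+c)^2}.
\]
The sign is controlled by the bracket $(4c-1)x+2c(2c+1)$, and the value $c=\tfrac14$ is singled out as exactly the place where the coefficient of the dominant term $x$ vanishes. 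Recognizing this is the conceptual core: it is the structural reason the critical constant is $1/4$.

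For the sufficiency (``if'') direction I would prove the two one–sided bounds by a last–contact argument anchored at infinity, \emph{not} a forward–invariance argument, because $\phi_c$ is not a trapping barrier. Writing $d=w-\phi_c$ and using the Riccati equation, one checks that at any point where $d=0$ one has $d'=-D[\phi_c]$, so the gap always crosses its zeros with a single fixed sign. For the upper bound take $c=0$, where $D[\phi_0]=-\frac{1}{4x^2}<0$; then the gap $d_+=\phi_0-w$ satisfies $d_+'=D[\phi_0]<0$ at each of its zeros. Since the expansion $w-1=\frac{1}{2x}-\frac{1}{8x^2}+\frac{1}{8x^3}+\cdots$ gives $d_+\sim\frac{1}{8x^2}>0$ as $x\to\infty$, the elementary principle ``a smooth function positive near $+\infty$ and strictly decreasing through each of its zeros is positive throughout $(0,\infty)$'' yields $w<\phi_0$. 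For the lower bound take any $c\ge\tfrac14$: then the bracket above is nonnegative for all $x>0$, so $D[\phi_c]>0$, the gap $d_-=w-\phi_c$ obeys $d_-'=-D[\phi_c]<0$ at its zeros, and the same principle gives $w>\phi_c$. Monotonicity of $\phi_c$ in $c$ reduces the lower bound to the borderline $c=\tfrac14$, where positivity of the gap at infinity is first visible at order $x^{-3}$ with coefficient $\frac18-\frac{1}{32}=\frac{3}{32}>0$.

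For the necessity (``only if'') direction I would read both constraints off the two endpoints. As $x\to0^+$, $K_0(x)$ blows up only logarithmically while $K_1(x)\sim 1/x$, so $w(x)\to\infty$; if $q>0$ then $\phi_q$ stays bounded near $0$, so the upper bound $w<\phi_q$ must fail, forcing $q=0$. As $x\to\infty$, the gap $w-\phi_c\sim(\frac c2-\frac18)x^{-2}$ is negative whenever $c<\tfrac14$ (equivalently, the bracket in $D[\phi_c]$ is eventually negative, so $\phi_c$ acts as a subsolution near infinity), and then the comparison above forces $w<\phi_p$ for large $x$, violating the lower bound; hence $p\ge\tfrac14$. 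Together these yield exactly $p\ge\tfrac14$ and $q=0$.

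The step I expect to be the main obstacle is making the comparison rigorous, because the natural barrier does not trap the solution from the side one wants: once $w$ touches $\phi_c$ the gap moves in the ``wrong'' direction, so the argument cannot be run forward from $x=0$ and must instead be anchored at $x=+\infty$ and closed by contradiction at the last contact point. This, in turn, forces me to certify that the relevant gap is genuinely positive near infinity using the sharp two– and three–term asymptotics of $K_1/K_0$, with the borderline case $p=\tfrac14$ decided only at order $x^{-3}$. Deriving the Riccati equation, the closed form of $D[\phi_c]$, and the asymptotic coefficients are otherwise routine.
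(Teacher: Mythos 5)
Your proposal is correct, but there is nothing in the paper to compare it against: the paper states this lemma as an imported result, quoted verbatim from \cite{mbf} and used as a black box, with no internal proof. Your argument is therefore a self-contained alternative to the literature proof, and I have checked its key steps. The Riccati equation $w'=w^2-\frac{w}{x}-1$ does follow from $K_0'=-K_1$ and $K_1'=-K_0-\frac{1}{x}K_1$; the defect really collapses to $D[\phi_c](x)=\frac{(4c-1)x+2c(2c+1)}{4x(x+c)^2}$, giving $D[\phi_0]=-\frac{1}{4x^2}<0$ and $D[\phi_c]>0$ for all $c\geq\frac14$; the identity $d'=\mp D[\phi_c]$ at contact points is right; and the expansions $w=1+\frac{1}{2x}-\frac{1}{8x^2}+\frac{1}{8x^3}+O(x^{-4})$ and $w-\phi_c=\bigl(\frac{c}{2}-\frac18\bigr)x^{-2}+\bigl(\frac18-\frac{c^2}{2}\bigr)x^{-3}+\cdots$ check out against the standard large-$x$ series for $K_0,K_1$, including the borderline coefficient $\frac18-\frac{1}{32}=\frac{3}{32}$ at $c=\frac14$. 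Your last-contact principle anchored at $+\infty$ is sound: the zero set of the gap is closed and bounded above, so if nonempty it has a largest element, where the strict sign of the derivative pushes the gap negative immediately to the right, contradicting positivity near infinity via the intermediate value theorem. The endpoint necessity arguments are also correct: $w(x)\to\infty$ as $x\to 0^+$ (since $K_1\sim 1/x$ while $K_0$ grows only logarithmically) rules out any $q>0$, and the sign of the $x^{-2}$ term rules out any $p<\frac14$. Two minor streamlining remarks: for the lower bound your two routes are redundant --- it suffices to run the contact argument only at $c=\frac14$ and then use that $\phi_c$ is decreasing in $c$, avoiding a separate asymptotic check for each $c>\frac14$; and for the necessity of $p\geq\frac14$ the asymptotic $w-\phi_p\sim\bigl(\frac{p}{2}-\frac18\bigr)x^{-2}<0$ is already a direct contradiction with $\phi_p<w$, so no subsolution language or comparison machinery is needed there. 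Compared with the proofs typical of the Bessel-function literature (monotonicity of ratios, integral representations, or l'H\^opital-type monotone rules), your ODE-defect route is arguably more elementary, requiring only the derivative recurrences and the asymptotic expansion to order $x^{-3}$.
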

 
Now, from equation (\ref{eq126}) and (\ref{eq130}) we get, 
$$\frac{1}{2(x+p)}<\frac{K_1(x)}{K_0(x)}-1<\frac{1}{2(x+q)}.$$
When $x=k(a+1)>0,$
$$\frac{1}{2(k(a+1)+p)}<\frac{K_1(k(a+1))}{K_0(k(a+1))}-1<\frac{1}{2(k(a+1)+q)}.$$
$$\frac{1}{2(k(a+1)+p)}<\frac{\beta}{k}<\frac{1}{2(k(a+1)+q)}.$$
$$\frac{1}{2((a+1)+\frac{p}{k})}< \beta <\frac{1}{2((a+1)+\frac{q}{k})}.$$
Since $a>0, p\geq \frac{1}{4}, k>0$, which tells us that $\beta >0$ and hence $\beta_{cr}>0$ and  $\beta \in (0,\frac{1}{2}).$ 
\end{proof}
\bibliographystyle{plain}
\bibliography{beta_critical}

\end{document}